\tikzset{->-/.style={decoration={
  markings,
  mark=at position .97 with {\arrow{>}}},postaction={decorate}}}
\newcommand{\vect}[1]{#1}
\newcommand{\DS}{\displaystyle}
\newcommand{\h}[1]{{\color{red}{#1}}}
\title{Navigating in a sea of repeats in RNA-seq without drowning}
\author{Gustavo Sacomoto\inst{1,2} \and Blerina Sinaimeri\inst{1,2}
  \and Camille Marchet\inst{1,2} \and Vincent Miele\inst{2} \and
  Marie-France Sagot\inst{1,2} \and Vincent Lacroix\inst{1,2}}
\institute{
  INRIA Rh\^one-Alpes 
  \and
  Universit\'e Lyon 1, F-69000 Lyon; CNRS, UMR5558.
}
\begin{document}

\maketitle              

\begin{abstract}
The main challenge in \emph{de novo} assembly of NGS data is certainly
to deal with repeats that are longer than the reads. This is
particularly true for RNA-seq data, since coverage information cannot
be used to flag repeated sequences, of which transposable elements are
one of the main examples. Most transcriptome assemblers are based on
de Bruijn graphs and have no clear and explicit model for repeats in
RNA-seq data, relying instead on heuristics to deal with them. The
results of this work are twofold. First, we introduce a formal model
for representing high copy number repeats in RNA-seq data and exploit
its properties for inferring a combinatorial characteristic of
repeat-associated subgraphs. We show that the problem of identifying
in a de Bruijn graph a subgraph with this characteristic is
NP-complete. In a second step, we show that in the specific case of a
local assembly of alternative splicing (AS) events, we can
\emph{implicitly} avoid such subgraphs. In particular, we designed and
implemented an algorithm to efficiently identify AS events that are
not included in repeated regions. Finally, we validate our results
using synthetic data. We also give an indication of the usefulness of
our method on real data.
\end{abstract}


\section{Introduction}
Transcriptomes can now be studied through sequencing. However, in the
absence of a reference genome, de novo assembly remains a challenging
task. The main difficulty certainly comes from the fact that
sequencing reads are short, and repeated sequences within
transcriptomes could be longer than the reads. This short read / long
repeat issue is of course not specific to transcriptome sequencing. It
is an old problem that has been around since the first algorithms for
genome assembly.  In this latter case, the problem is somehow easier
because the coverage can be used to discriminate contigs that
correspond to repeats, \textit{e.g.} using Myer's
A-statistics~\cite{Celera} or \cite{Novak10}.  In transcriptome
assembly, this idea does not apply, since the coverage of a gene does
not only reflect its copy number in the genome, but also and mostly
its expression level. Some genes are highly expressed and therefore
highly covered, while most genes are poorly expressed and therefore
poorly covered.

Initially, it was thought that repeats would not be a major issue,
since they are mostly in introns and intergenic regions. However, the
truth is that many regions which are thought to be intergenic are
transcribed~\cite{Encode12} and introns are not always spliced out yet
when mRNA is collected to be sequenced. Repeats are therefore very
present in real samples, especially transposable elements, and cause
major problems in transcriptomic assembly.

Most, if not all current short-read transcriptome assemblers are based
on de Bruijn graphs.  Among the best known are {\sc Oases}
\cite{oases}, {\sc Trinity} \cite{trinity}, and to a lesser degree
{\sc Trans-Abyss} \cite{Transabyss} and {\sc IDBA-tran}
\cite{Idbatran}. Common to all of them is the lack of a clear and
explicit model for repeats in RNA-seq data. Heuristics are thus used
to try and cope efficiently with repeats. For instance, in {\sc Oases}
short nodes are thought to correspond to repeats and are therefore not
used for assembling genes. They are added in a second step, which
hopefully causes genes sharing repeats not to be assembled together.
In {\sc Trinity}, there is no attempt to deal with repeats explicitly.
The first module of {\sc Trinity}, Inchworm, will try and assemble the
most covered contig which hopefully corresponds to the most abundant
alternative transcript. Then alternative exons are glued to this major
transcript to form a splicing graph. The last step is to enumerate all
alternative transcripts. If repeats are present, their high coverage
may be interpreted as a highly expressed link between two unrelated
transcripts.  Overall, assembled transcripts may be chimeric or spliced
into many sub-transcripts.

In the method we developed, {\sc KisSplice}, which is a local
transcriptome assembler~\cite{kissplice}, repeats may be less
problematic, since the goal is not to assemble full-length
transcripts. Instead, {\sc KisSplice} aims at finding variations in
the transcriptome (SNPs, indels and alternative splicings).  However,
we previously reported that we were not able to deal with large parts
of the de Bruijn graph containing subgraphs associated to highly
repeated sequences~\cite{kissplice}.

Here, we try and achieve two goals: (i) give a clear formalization of
the notion of repeats in RNA-seq data, and (ii) give a practical way
to enumerate bubbles that are lost because of such repeats. Recall
that we are in a \emph{de novo} context, so we assume that neither a
reference genome/transcriptome nor a database of known repeats,
e.g. \textsc{RepeatMasker}~\cite{repeatmasker}, is available.
  
In particular, we formally introduce a model for representing the
repeats and exploit its properties to infer a parameter characterizing
repeated-associated subgraphs in a de Bruijn graph. We prove its
relevance but we also show that the problem of identifying in a de
Bruijn graph a subgraph corresponding to repeats using such a
characteristic is NP-complete. Hence, a polynomial time algorithm for
repeat identification that uses such characterization is
unlikely. Finally, we show that in the specific case of a local
assembly of alternative splicing (AS) events, we can \emph{implicitly}
avoid such subgraphs. More precisely, it is possible to find the
structures (\textit{i.e.} bubbles) corresponding to AS events in a de
Bruijn graph that are not contained in a repeated-associated subgraph.
Finally, using simulated RNA-seq data we show that the new algorithm
can improve by a factor 2 the sensitivity of \textsc{KisSplice},
while also \emph{improving} the precision. We also give an
indication of the usefulness of our method in real data.


\section{Preliminaries}
Let $\Sigma$ be an alphabet of fixed size $\sigma$.  Here we always
assume $\Sigma=\{A,C,T,G\}$.  Given $s \in \Sigma^*$, let $|s|$ denote
its length, $s[i]$ the $i$th element of $s$, and $s[i,j]$ the
substring $s[i] s[i+1] \ldots s[j]$ for any $i<j \leq |s|$.

A \emph{$k$-mer} is a sequence $s \in \Sigma^k$.  Given an integer $k$
and a set $S$ of sequences each of length $n \geq k$, we define
$span(S,k)$ as the set of all distinct $k$-mers that appear as a
subsequence in $S$.

\begin{definition}
  Given a set of sequences (reads) $R \subseteq \Sigma^*$ and an integer
  $k$, we define the directed de Bruijn graph $G_k(R)=(V,A)$ where
  $V=span(R,k)$ and $A=span(R,k+1)$.
\end{definition}

Given a directed graph $G = (V,A)$ and a vertex $v \in V$, we denote
its {\em out-neighborhood} (resp. {\em in-neighborhood}) by
$N^+(v)=\{ u | (v,u) \in A \}$ (resp.  $N^-(v)=\{ u | (u,v) \in A
\}$), and its out-degree (in-degree) by $d^+(v)=|N^+(v)|$
($d^-(v)=|N^-(v)|$). A (simple) \emph{path} $\pi = s \leadsto t$ in
$G$ is a sequence of distinct vertices $v_1, \ldots, v_l$ in which
$v_0=s$, $v_l=t$ and for each $0 \leq i \leq t$, $(v_i, v_{i+1})$ is
an arc of $G$. If the graph is weighted, \textit{i.e.}  if there is a
function $w : A \rightarrow Q_{\geq 0}$ associating a weight to every
arc in the graph, then the \emph{length} of a path $\pi$ is the sum of
the weights of the traversed arcs, and is denoted by $|\pi|$.

An arc $(u,v) \in A$ is called \emph{compressible} if $d^+(u)=1$ and
$d^-(v)=1$. The intuition behind this definition comes from the fact
that every path passing through $u$ should also pass through $v$. It
should therefore be possible to ``compress'' or contract this arc
without losing any information. Note that the compressed de Bruijn
graph~\cite{trinity,oases} commonly used by transcriptomic assemblers
is obtained from a de Bruijn graph by replacing, for each compressible
arc $(u,v)$, the vertices $u,v$ by a new vertex $x$, where $N^-(x) =
N^-(u)$, $N^+(x) = N^+(v)$ and the label is the concatenation of the
$k$-mers of $u$ and $v$ without the overlapping part.  See
Fig.~\ref{fig:arc} for an example of a compressible arc in a de Bruijn
graph.

\begin{figure}
\centering
\subfloat[]{
    \resizebox{!}{1.7cm}{%
    \begin{tikzpicture}[->,>=stealth',shorten >=1pt,auto,node distance=1.5cm,
      thick,main node/.style={rectangle,draw,font=\sffamily\bfseries}]

  \node[main node] (1) {CTG};
    \node[main node] (2) [above left of=1] {ACT};
    \node[main node] (3) [below left of=1] {TCT};
     \node[main node] (4) [right=1.4cm of 1] {TGA};
     \node[main node] (5) [above right of=4] {GAT};
     \node[main node] (6) [below right of=4] {GAG};

    \path[every node/.style={font=\sffamily\small}]
      (2) edge node [left] {} (1)
      (3) edge node [left] {} (1)
      (1) edge node [left] {} (4)
      (4) edge node [left] {} (5)
      (4) edge node [left] {} (6)    
                ;

    \end{tikzpicture} 
    }
} \qquad \qquad \qquad \qquad
\subfloat[]{
  \resizebox{!}{1.7cm}{%
    \begin{tikzpicture}[->,>=stealth',shorten >=1pt,auto,node distance=1.5cm,
      thick,main node/.style={rectangle,draw,font=\sffamily\bfseries}]

  \node[main node] (1) {CTGA};
    \node[main node] (2) [above left of=1] {ACT};
    \node[main node] (3) [below left of=1] {TCT};
     \node[main node] (4) [above right of=1] {GAT};
     \node[main node] (5) [below right of=1] {GAG};

    \path[every node/.style={font=\sffamily\small}]
      (2) edge node [left] {} (1)
      (3) edge node [left] {} (1)
      (1) edge node [left] {} (4)
      (1) edge node [left] {} (5)
                ;

    \end{tikzpicture} 
    }
} 
\caption{\footnotesize{(a) The arc $(CTG, TGA)$ is the only
    compressible arc in the de Bruijn graph ($k=3$). (b) The
    corresponding compressed de Bruijn graph. }}
\label{fig:arc}
\end{figure}
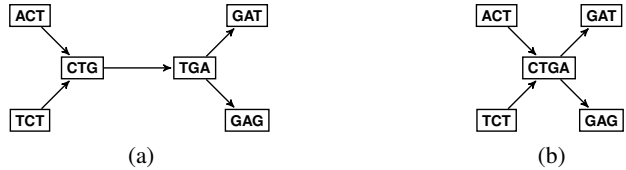

\vspace{-1cm}
\section{Repeats in de Bruijn graphs}   
Given a de Bruijn graph $G_k(R)$ generated by a set of reads $R$ for
which we do not have any information, the aim is to identify whether
there are subgraphs of $G_k(R)$ that correspond each to a set of
repeats in $R$.  To this end, we try to identify and then exploit some
of the topological properties of the subgraphs that are induced by
repeats. Starting with a formal model for representing the repeats, we
show that the number of compressible arcs, which we denote by
$\gamma$, is a relevant parameter for such a
characterization. However, we also prove that, for an arbitrary de
Bruijn graph, identifying a subgraph $G'$ with bounded $\gamma(G')$ is
NP-complete.

\subsection{Simple uniform model for repeats}

We now present the model we adopted for representing the repetition of
a same sequence in a genome or transcriptome. This model is a simple
one and as such should be seen as only a first approximation of what
may happen in reality. Moreover, it allowed us to infer one
characteristic of repeated-associated subgraphs in a de Bruijn graph,
namely that such subgraphs contain few compressible arcs.  It is
important to point out however that the model is realistic enough in
most cases. In particular, it enables to model well recent invasions
of transposable elements which often involve a large number of similar
copies. Such elements are among the most important sources of repeats
in the analysis of NGS data (DNA or RNA-seq).  The number of
compressible arcs is one of the criteria used to define so-called
short nodes in {\sc Oases} but no model or theoretical justification
was given \cite{oases}. This is one of the main contributions of this
paper.  In future, it would be important to consider more realistic
models, and we mention one in the perspectives.  Mathematically, one
has to be aware however that such models will be harder to analyze.

The model is as follows. First, due to mutations, the sequences $s_1,
\ldots, s_m$ that represent the repeats are not identical. However,
provided that the number of such mutations is not high (otherwise the
concept of repeats would not apply), the repeats are considered
``similar'' in the sense of pair-wisely presenting a small Hamming
distance.  We recall that, given two equal length sequences $\vect{s}$
and $\vect{s}'$ in $\Sigma^n$, the \emph{Hamming distance} between
them, denoted by $d_H(\vect{s},\vect{s}')$, is the number of positions
$i$ for which $s[i] \neq s'[i]$.

The model has then the following parameters: $\Sigma$, the length $n$
of the repeat, the number $m$ of copies of the repeat, an integer $k$
(for the length of the $k$-mers considered), and the mutation rate,
$\alpha$, i.e. the probability that a mutation happens in a particular
position.

We first choose uniformly at random a sequence $s_0 \in \Sigma^n$.  At
step $i \leq m$, we create a sequence $s_i$ as follows: for each
position $j$, $s_i[j]=s_0[j]$ with probability $1-\alpha$, whereas
with probability $\alpha$ a value different from $s[j]$ is chosen
uniformly at random for $s_i[j]$. We repeat the whole process $m$
times and thus create a set $S(m,n,\alpha)$ of $m$ such sequences from
$s_0$ (see Fig.\ref{fig:example_repeats} for a small example).  The
generated sequences thus have an expected Hamming distance of $\alpha
n$ from $s_0$.

\begin{figure}[htbp]
\[
\begin{blockarray}{ccccccccccc}
c_1 & c_2 & c_3 & c_4 & c_5 & c_6 & c_7 & c_8 & c_9 & c_{10} \\
\begin{block}{(cccccccccc)c }
A & A & C & T & G & T & A & T & C & C & \quad s_0 \\
A & {\h{C}} & C & T & G & T & A & {\h{G}} & C & C & \quad s_1 \\
{\h{G}} & A & C & T & \h{C} & \h{A} & A & T & C & C & \quad s_2 \\
A & A & C & T & \h{C} & T & A & T & C & C & \quad s_3 \\
A & A & C & A & G & T & A & T & C & \h{A} & \quad s_4 \\
A & A & \h{T} & T & G & T & A & \h{G} & C & C & \quad s_5 \\
A & \h{G} & C & T & G & T & A & \h{T} & C & \h{A} & \quad s_6 \\
\vdots & \vdots & \vdots & \vdots & \vdots & \vdots & \vdots & \vdots & \vdots &\vdots  & \quad  \\
A & A & \h{G} & T & G & \h{A} & A & T & C & C & \quad s_{20} \\
\end{block}
\end{blockarray}
 \]
 \vspace{-2em}
 \caption{\footnotesize{An example of a set of repeats
     $S(20,10,0.1)$.}}
 \label{fig:example_repeats}
\end{figure}
\vspace{-1cm}

\subsection{Topological characterization of the subgraphs generated by repeats}

Given a de Bruijn graph $G_k(R)$, if $a$ is a compressible arc labeled
by the sequence $s=s_1 \ldots s_{k+1}$, then by definition, $a$ is the
only outgoing arc of the vertex labeled by the sequence $s[1,k]$ and
the only incoming arc of the vertex labeled by the sequence
$s[2,k+1]$. Hence the $(k-1)$-mer $s[2,k]$ will appear as a substring
in $R$, always preceded by the symbol $s[1]$ and followed by the
symbol $s[k+1]$. For the sake of simplicity, we refer to such
$(k-1)$-mers as being \emph{boundary rigid}. It is not difficult to
see that the set of compressible arcs in a de Bruijn graph $G_k(R)$
stands in a one-to-one correspondence with the set of boundary rigid
$(k-1)$-mers in $R$.

We now calculate and compare among them the expected number of
compressible arcs in $G=G_k(R)$ when $R$ corresponds to a set of
repeats that are generated: (i) uniformly at random, and (ii)
according to our model.  We show that $\gamma$ is ``small'' in the
cases where the induced graph corresponds to similar sequences, which
provides evidence for the relevance of this parameter.

\begin{claim}
  Let $R$ be a set of $m$ sequences randomly chosen from
  $\Sigma^n$. Then the expected number of compressible arcs in
  $G_k(R)$ is $\Theta(mn)$.
\end{claim}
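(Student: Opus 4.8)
The plan is to start from the correspondence established just above the statement: $\gamma$ equals the number of \emph{boundary rigid} $(k-1)$-mers occurring in $R$. The upper bound $E[\gamma] = O(mn)$ is then immediate and deterministic, since a boundary rigid $(k-1)$-mer is in particular a distinct $(k-1)$-mer occurring in $R$, and there are at most $m(n-k+2) \le mn$ occurrences of $(k-1)$-mers in total. All the work lies in the matching lower bound $E[\gamma] = \Omega(mn)$, for which I would exhibit a large family of $(k-1)$-mers that are \emph{guaranteed} to be boundary rigid.

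The key observation for the lower bound is that any $(k-1)$-mer occurring exactly once in $R$ at an \emph{internal} position (one having both a left and a right neighbour inside its own sequence) is automatically boundary rigid: having a single occurrence, it is vacuously always preceded and always followed by the same symbol. Hence $\gamma \ge X$, where $X$ is the number of internal occurrence positions whose $(k-1)$-mer value is repeated nowhere else in $R$, and it suffices to prove $E[X] = \Omega(mn)$. There are $m(n-k)$ internal positions, so by linearity of expectation I only need to show that a fixed internal position is collision-free with constant probability.

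For a fixed internal position $p$, I would bound the probability that some other position $p'$ carries the same $(k-1)$-mer by a union bound over the fewer than $mn$ other positions, reducing everything to the pairwise collision probability $P(\mathrm{value}(p')=\mathrm{value}(p))$. The clean fact — and the step that needs care — is that this probability equals $\sigma^{-(k-1)}$ for \emph{every} pair $p\neq p'$, irrespective of overlap. For positions in different sequences, or non-overlapping positions in the same sequence, this is immediate from independence and uniformity. For two overlapping windows in the same sequence at shift $d$ with $1\le d\le k-2$, equality forces the length-$(d+k-1)$ span they jointly cover to be periodic with period $d$; such a string is determined by its first $d$ characters, so the probability is $\sigma^{d}/\sigma^{\,d+k-1}=\sigma^{-(k-1)}$ as well.

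Assembling the pieces, $P(p \text{ collides}) \le mn\,\sigma^{-(k-1)}$, whence
$$E[\gamma] \ge E[X] \ge m(n-k)\bigl(1 - mn\,\sigma^{-(k-1)}\bigr).$$
This is $\Omega(mn)$ exactly in the regime $\sigma^{k-1}=\Omega(mn)$ (with the hidden constant slightly above $1$) together with $k\le(1-\delta)n$, i.e. when $k$ is large enough that $k$-mers are mostly unique — precisely the regime of interest for de Bruijn graph assembly, and one that the deterministic upper bound $\gamma\le\sigma^{k-1}$ shows is in fact necessary for the statement to hold at all. Combined with the $O(mn)$ upper bound this gives $E[\gamma]=\Theta(mn)$. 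The main obstacle I anticipate is the overlapping-windows case of the pairwise collision bound: handling self-overlaps correctly through the periodicity argument is what makes the union bound go through uniformly, and it is the one place where a naive independence heuristic would give the wrong probability.
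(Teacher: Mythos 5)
Your proof is correct, and while it rests on the same underlying idea as the paper's --- in the low-density regime $mn \lesssim 4^{k-1}$, most $(k-1)$-mers occur exactly once, and a globally unique $(k-1)$-mer occurring at an internal position is vacuously boundary rigid --- your execution is genuinely different and, in fact, rigorous where the paper's is not. The paper argues by first moment alone: it computes the expected number of occurrences of a fixed $(k-1)$-mer, $m(n-k+2)(1/4)^{k-1}$, notes that under the assumption $m(n-k+2)\le 4^k$ this is $O(1)$, and then leaps from ``expected multiplicity bounded'' to ``all the sequences of length $k-1$ are boundary rigid (as a sequence appears once)'' --- a non sequitur as literally written, since a bound on an expectation does not preclude collisions (and the stated bound yields $4$, not $1$); it then concludes by asserting there are $m(n-k+1)$ \emph{different} $k$-mers, again without justifying distinctness. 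Your per-position union bound supplies exactly the missing step, and the one place requiring genuine care --- the self-overlapping windows, where you show via the periodicity argument that two windows at shift $d$ collide with probability exactly $\sigma^{d}/\sigma^{d+k-1}=\sigma^{-(k-1)}$ --- is a case the paper never confronts at all; this turns the heuristic into the valid bound $E[\gamma]\ge m(n-k)\bigl(1-mn\,\sigma^{-(k-1)}\bigr)=\Omega(mn)$. You are also right that a regime assumption is unavoidable: since $\gamma\le\sigma^{k-1}$ deterministically, the claim as stated is false without something like $\sigma^{k-1}=\Omega(mn)$ (together with $k\le(1-\delta)n$, which is likewise necessary since $\gamma \le m(n-k+2)$), and the paper smuggles in precisely this condition mid-proof via the clause $m(n-k+2)\le 4^k$ rather than stating it in the claim. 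In short: the paper's sketch buys brevity; yours buys a correct proof plus an explicit delimitation of when the statement can hold at all.
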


\begin{proof}
The probability that a sequence of length $k-1$ occurs in a fixed
position in a randomly chosen sequence of length $n$ is
$(1/4)^{k-1}$. Thus the expected number of appearances of a sequence
of length $k-1$ in a set of $m$ randomly chosen sequences of length
$n$ is given by $m(n-k+2) (1/4)^{k-1}$.  If $m(n-k+2) \leq 4^k$, then
this value is upper bounded by $1$, and all the sequences of length
$k-1$ are boundary rigid (as a sequence appears once). The claim
follows by observing that there are $m(n-k+1)$ different
$k$-mers. \qed
\end{proof}

We consider now $\gamma(G_k(R))$ for $R=S(m,n,\alpha)$.  We upper
bound the expected number of compressible arcs by upper bounding the
number of boundary rigid $(k-1)$-mers.

\begin{theorem}\label{theo:expected_compressed_edges}
  Given integers $k, n, m$ with $k<n$ and a real number $0\leq \alpha
  \leq 3/4$, the de Bruijn graph $G_k(S(m,n,\alpha))$ has $o(nm)$
  expected compressible arcs.
\end{theorem}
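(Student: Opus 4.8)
The plan is to estimate $E[\gamma(G_k(S(m,n,\alpha)))]$ directly through the one-to-one correspondence with boundary rigid $(k-1)$-mers established just above the statement, so that it suffices to bound the expected number of boundary rigid $(k-1)$-mers in $R=S(m,n,\alpha)$ from above by $o(nm)$. I would write this count as a sum, over the distinct $(k-1)$-mers occurring in $R$, of the indicator that all occurrences of the $(k-1)$-mer share the same left symbol and the same right symbol, and then split the sum according to how many times each $(k-1)$-mer occurs. Because distinct positions are mutated independently in the model, the flanking symbol of a given window is essentially independent of the window content, which is what makes both halves of the argument tractable.

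First I would handle the $(k-1)$-mers occurring at least twice. For $t\ge 2$ occurrences, boundary rigidity forces their left symbols to coincide and their right symbols to coincide; since for $\alpha\le 3/4$ the consensus symbol is the most likely outcome at every position (because $1-\alpha\ge\alpha/3$ is equivalent to $\alpha\le 3/4$), the probability that $t$ independent flanks all agree decays geometrically in $t$. The dominant multiply-occurring $(k-1)$-mers are the ``consensus'' ones, i.e. the substrings of $s_0$, of which there are at most $n-k+2$; each occurs in expectation in $m(1-\alpha)^{k-1}$ copies, so as this multiplicity grows the probability of being boundary rigid is negligible and the whole case contributes $o(nm)$.

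The main work, and the step I expect to be the obstacle, is bounding the expected number of $(k-1)$-mers occurring \emph{exactly once}, since those are boundary rigid for free. I would upper bound this by the expected number of window occurrences $(i,p)$ that are matched by no other sequence at the same position, i.e. by $m\sum_{p}E_{c}[(1-g(c))^{m-1}]$, where $g(c)$ is the probability that one sequence realises the content $c$ on the window starting at $p$. Here $\alpha\le 3/4$ has a clean meaning: two independently generated windows collide with probability $\rho^{k-1}$ where $\rho=(1-\alpha)^2+\alpha^2/3\ge 1/4$, so collisions are at least as frequent as for uniformly random sequences — this is the quantitative content of the sequences being ``similar''. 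To turn this into an $o(nm)$ bound I would not use a worst-case window but control the full distribution of $\log g(c)$, which is a sum of $k-1$ i.i.d.\ terms concentrated around $-(k-1)H$ for the per-position entropy $H$; a standard large-deviation estimate then shows that a typical window has $m\,g(c)\to\infty$ and is therefore almost surely non-unique, forcing $E_{c}[(1-g(c))^{m-1}]=o(1)$ per position. Summing over the at most $n-k+2$ positions, and noting that cross-position collisions only \emph{decrease} the number of unique $(k-1)$-mers so that ignoring them is safe for an upper bound, yields $o(nm)$.

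The delicate points I would have to nail down are the (near-)independence used to factor both the flank-agreement and the collision probabilities, and the stitching of the per-position singleton estimate to the global count in the parameter regime that makes the statement non-trivial. This is exactly where $\alpha\le 3/4$, rather than $\alpha$ close to one, is essential: at $\alpha=3/4$ each position becomes uniform on $\Sigma$ and the model degenerates into uniformly random sequences, the borderline separating the repeat regime analysed here from the random regime of the previous claim. I would then combine the two cases by linearity of expectation to conclude $E[\gamma(G_k(S(m,n,\alpha)))]=o(nm)$.
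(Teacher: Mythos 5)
Your skeleton is the same as the paper's: both proofs go through the one-to-one correspondence with boundary rigid $(k-1)$-mers, restrict attention to \emph{same-position} matches in the other copies (exactly as you do, and for the same reason --- independence across copies gives a clean $(1-p)^{m-1}$ factor), and then argue this probability is small when $m$ is large. The main structural difference is that the paper needs no singleton/multiple case split: it bounds $\Pr[\hat{s}\text{ rigid}\mid d_H(\hat{s},\hat{s}_0)=d]$ by the probability that no other copy realizes the event ``matches the window \emph{and} differs in a flanking symbol,'' whose per-copy probability is at least $(2\alpha-\tfrac{4}{3}\alpha^2)(1-\alpha)^{k-1-d}(\alpha/3)^d$; this single event simultaneously rules out uniqueness and flank agreement, so your two-case decomposition is avoidable. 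Your averaged/large-deviation treatment of the conditioning on $d$ is, if anything, sharper than the paper's, which crudely sums the conditional probabilities over $d$ unweighted.

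The genuine gap is at the step you yourself flag as the obstacle: the claim that ``a standard large-deviation estimate then shows that a typical window has $m\,g(c)\to\infty$.'' This is not a consequence of $\alpha\le 3/4$ and large deviations alone: for a typical window $g(c)\approx e^{-(k-1)H}$ with $H=-(1-\alpha)\log(1-\alpha)-\alpha\log(\alpha/3)$, so $m\,g(c)\to\infty$ requires $m$ \emph{exponentially large in $k$}, a hypothesis you never state. Without it the conclusion is false --- for fixed (or polynomial) $m$ and growing $k<n$, almost every $(k-1)$-mer occurs exactly once, hence is vacuously boundary rigid, and $\gamma=\Theta(nm)$, matching the random-sequence Claim rather than the theorem. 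The paper closes this hole (somewhat informally) by explicitly restricting to ``a sufficiently large number of copies (\textit{e.g.} $m=\binom{k}{\alpha k}$)'' in its final step; your argument needs the analogous quantitative assumption, and it also infects your case (a): for a window occurring exactly twice the flank-agreement probability is a constant (roughly $\bigl((1-\alpha)^2+\alpha^2/3\bigr)^2$), so geometric decay in $t$ does not by itself make the multiply-occurring case $o(nm)$ --- you must again invoke the same estimate to show that low-multiplicity windows are rare, i.e., the same largeness of $m$ relative to $e^{(k-1)H}$.
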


\begin{proof}
Let $s_0$ be a sequence chosen randomly from $\Sigma^n$.  Let
$S(m,n,\alpha)=\{s_1, \ldots, s_m\}$ be the set of $m$ repeats
generated according to our model starting from $\vect{s}_0$.  Consider
now the de Bruijn graph $G=G_k(S(m,n,\alpha))$. Recall that the number
of compressible arcs in this graph is equal to the number of boundary
rigid $(k-1)$-mers in $S(m,n,\alpha)$. Let $X$ be a random variable
representing the number of boundary rigid $(k-1)$-mers in $G$.
Consider the repeats in $S(m,n,\alpha)$ in a matrix-like ordering as
in Fig.\ref{fig:example_repeats} and observe that the mutations from
one column to another are independent.  Due to the symmetry and the
linearity of expectation, $E[X]$ is given by $m(n-k-1)$ (the total
number of $(k-1)$-mers) multiplied by the probability that a given
$(k-1)$-mer is boundary rigid.

The probability that the $(k-1)$-mer $\hat{s}=s[i,i+k-2]$ is boundary
rigid clearly depends on the distance from the starting sequence
$\hat{s}_0=s_0[i,i+k-2]$.  Let $d$ be the distance $d_H(\hat{s}
,\hat{s}_0)$.

Observe that if the $(k-1)$-mer $s[i] \ldots s[k-1]$ is not boundary
rigid then there exists a sequence $y$ in $S(m,n,\alpha)$ such that
$y[j]= s[j]$ for all $i \leq j \leq i+k-2 $ and either $y[i+k-1] \neq
s[i+k-1]$ or $y[i-1] \neq s[i-1]$. It is not difficult to see that the
probability that this happens is lower bounded by $(2\alpha-4/3
\alpha^2) (1-\alpha)^{k-1-d} (\alpha/3)^d$.  Hence we have:

$$
Pr[\hat{s} \textrm{ is boundary rigid} |  d_H(\hat{s} ,\hat{s}_0)=d ] \leq  \Biggl( 1- (2\alpha-4/3 \alpha^2) (1-\alpha)^{k-1-d} (\alpha/3)^d\Biggr)^{m-1} 
$$

By approximating the above expression we therefore have that,
\begin{flalign}
\DS
E[X]& \leq (n-k-1)m \sum_{d=0}^{k-1} Pr[\hat{s} \textrm{ is boundary rigid} |  d_H(\hat{s} ,\hat{s}_0)=d ] \\ \notag
&\leq (n-k-1) m e^{- (m-1)(2\alpha-4/3 \alpha^2)/(\frac{\alpha}{3})^{k-1}}
\end{flalign}

For a sufficiently large number of copies (\textit{e.g.}
$m=\binom{k}{\alpha k}$) and using the fact that $\binom{k}{\alpha k}
\geq (1/ \alpha )^{\alpha k}$, we have that $E[X]$ is $o(mn)$.  This
concludes the proof. \qed
\end{proof}

The previous result shows that the number of compressible arcs is a
good parameter for characterizing a repeat-associated subgraph.


\subsection{Identifying a repeat-associated subgraph} 

As we showed, a subgraph due to repeated elements has a distinctive
feature: namely it contains few compressible arcs. Based on this, a
natural formulation to the repeat identification problem in RNA-seq
data is to search for large enough subgraphs that do not contain many
compressible arcs. This is formally stated in
Problem~\ref{prob:subgraph}. In order to disregard trivial solutions,
it is necessary to require a large enough \emph{connected} subgraph,
otherwise any set of disconnected vertices or any small subgraph would
be a solution.  Unfortunately, we show that this problem is
NP-complete, so an efficient algorithm for the repeat identification
problem based on this formulation is unlikely.

\begin{problem}[Repeat Subgraph] \label{prob:subgraph}
 
  \textit{INSTANCE:} A directed graph $G$ and two positive integers
  $m$, $t$.

  \textit{DECIDE:} If there exists a (connected) subgraph $G'=(V',
  E')$, with $|V'| \geq m$ and having at most $t$ compressible arcs.
\end{problem}

In Theorem~\ref{thm:np-complete}, we prove that this problem is
NP-complete for all directed graphs with the (total) degree, i.e. the
sum of in and out-degree, bounded by 3. The reduction is from the
Steiner tree problem which requires finding a minimum weight subgraph
spanning a given subset of vertices. It remains NP-hard even when all
arc weights are 1 or 2 (see \cite{Bern89}), this version is denoted by
{\sc STEINER}$(1,2)$. More formally, given a complete undirected graph
$G = (V,E)$ with arc weights in $\{1,2\}$, a set of \emph{terminal}
vertices $N \subseteq V$ and an integer $B$, it is NP-complete to
decide if there exists a subgraph of $G$ spanning $N$ with weight at
most $B$, i.e. a connected subgraph of $G$ containing all vertices of
$N$.

We specify next a family of directed graphs that we use in the
reduction.  Given an integer $x$ we define the directed graph $R(x)$
as a cycle on $2x$ vertices numbered in a clockwise order and where
the arcs have alternating directions, i.e. for any $i \leq x$,
$(v_{2i},v_{2i+1})$ is an arc.  Note that in $R(x)$ all vertices in
even positions, i.e. $v_{2i}$ have out-degree $2$ and in-degree $0$
and those $v_{2i+1}$ out-degree $0$ and in-degree $2$. Clearly, none
of the arcs of $R(x)$ is compressible.

\begin{theorem} \label{thm:np-complete}
  The \emph{Repeat Subgraph Problem} is NP-complete even for directed
  graphs with degree bounded by $d$, for any $d \geq 3$.
\end{theorem}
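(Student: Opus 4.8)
The plan is to establish membership in NP and then prove NP-hardness by a polynomial-time reduction from \textsc{Steiner}$(1,2)$. Membership is direct: a subgraph $G'=(V',E')$ is a polynomial-size certificate, and one verifies $|V'|\ge m$, the connectivity of $G'$, and that at most $t$ arcs are compressible by computing, for each arc $(u,v)\in E'$, the degrees $d^+(u)$ and $d^-(v)$ \emph{within $G'$}; all of this is polynomial.

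For hardness I would start from an instance of \textsc{Steiner}$(1,2)$, a complete graph $G=(V,E)$ with weights in $\{1,2\}$, a terminal set $N\subseteq V$ and a bound $B$, and build a degree-$3$ directed graph $H$. Each vertex $v\in V$ becomes a \emph{vertex gadget}, a copy of $R(x)$, which has no compressible arcs and in which every vertex uses only two of its three degree slots, leaving one free \emph{port}; I make the gadgets of terminals large (with $2X$ vertices) and those of non-terminals small (with $x=|V|$, enough ports for the $|V|-1$ incident edges). Each undirected edge $e=(u,v)$ of weight $w_e$ is oriented arbitrarily and becomes an \emph{edge gadget}: a directed path leaving an even port of $R_u$ (raising its out-degree to $3$) and entering an odd port of $R_v$ (raising its in-degree to $3$). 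The two junction arcs then have an endpoint of degree different from $1$ and are not compressible, while the internal path arcs are; choosing the path length so that there are exactly $w_e$ internal compressible arcs makes a weight-$w_e$ edge contribute precisely $w_e$ to $\gamma$. Finally I set $t=B$ and $m=2X\,|N|$, with $X$ so large that $2X$ exceeds the total number of vertices contained in all non-terminal and edge gadgets combined.

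Correctness would then follow from two directions. Given a Steiner tree of weight $\le B$, taking the full vertex gadgets of its vertices together with the full edge gadgets of its edges yields a connected subgraph with at least $2X\,|N|=m$ vertices whose compressible arcs number exactly the tree weight $\le t$. Conversely, from any feasible $G'$ I would pass to a canonical form: since a terminal gadget that is \emph{entirely} absent would cost a full block of $2X>(\text{everything small})$ vertices and violate $|V'|\ge m$, every terminal gadget is touched; I then \emph{complete} each partially included vertex gadget and \emph{delete} each edge-gadget sub-path that fails to join two included gadgets. Completing a gadget only raises port degrees back to at least $2$, so it never creates a compressible arc and only removes the spurious ones produced by a half-included cycle or junction, while the oversized terminal gadgets guarantee that the size bound survives the deletions. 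The canonical subgraph uses whole gadgets only, is connected, spans every terminal, and has $\gamma$ equal to the total weight of the edges joining its gadgets; contracting each gadget to a point thus produces a connected subgraph of $G$ spanning $N$ of weight $\le t=B$, from which a Steiner tree of weight $\le B$ is extracted.

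The main obstacle is precisely that compressibility is defined relative to the chosen subgraph rather than to $H$: degrees are measured inside $G'$, so a partial inclusion of a gadget can both create and destroy compressible arcs and, a priori, decouple $\gamma(G')$ from the Steiner weight. The heart of the argument is therefore the joint design of the even-port/odd-port junctions together with the canonical-form transformation, where one must check carefully that completion never increases $\gamma$ and that pruning never breaks the size or connectivity requirements, so that no partially-included configuration can beat the exact weight-to-$\gamma$ correspondence --- all while keeping the total degree bounded by $d\ge 3$.
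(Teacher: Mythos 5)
Your proof is correct and takes essentially the same route as the paper: a reduction from {\sc STEINER}$(1,2)$ using the alternating-cycle gadgets $R(x)$, oversized copies for terminals so that $m$ forces every terminal gadget to be touched, weight-$w$ edges realized as directed paths contributing exactly $w$ compressible arcs, and $t=B$. The only differences are minor---you attach edge paths at source/sink ports so the junction arcs are incompressible and the $w$ compressible arcs are internal, whereas the paper runs a $w$-arc path from a sink of $r(u)$ to a source of $r(v)$ with all its arcs compressible---and your explicit completion/pruning canonicalization carefully spells out the subgraph-relative-degree robustness issue that the paper's terser argument leaves implicit.
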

\begin{proof}
  Given a complete graph $G = (V,E)$, a set of terminal vertices $N$
  and an upper bound $B$, i.e. an instance of {\sc STEINER}$(1,2)$, we
  transform it into an instance of \emph{Repeat Subgraph Problem} with
  a graph $G'$ with degree bounded by $3$. Let us build the graph $G'
  = (V', E')$. For each vertex $v$ in $V \setminus N$, add a
  corresponding subgraph $r(v) = R(|V|)$ in $G'$ and for each vertex
  $v$ in $N$, add a corresponding subgraph $r(v) = R(|E|+|V|^2 + 1)$
  in $G'$. For each arc $(u,v)$ in $E$ with weight $w \in \{1,2\}$,
  add a simple directed path composed by $w$ compressible arcs
  connecting $r(u)$ to $r(v)$ in $G'$, the subgraphs corresponding to
  $u$ and $v$. The first vertex of the path should be in a sink of
  $r(u)$ and the last vertex in a source of $r(v)$. By construction
  there are at least $|V|$ vertices with in-degree $2$ and out-degree
  $0$ (sink) and $|V|$ vertices with out-degree $2$ and in-degree $0$
  (source) in both $r(v)$ and $r(u)$. It is clear that $G'$ has degree
  bounded by $3$. Moreover, the size of $G'$ is polynomial in the size
  of $G$ and it can also be polynomially constructed.

  In this way, the graph $G'$ has one subgraph for each vertex of $G$
  and a path with one or two (depending on the weight of the
  corresponding arc) compressible arcs for each arc of $G$. Thus,
  there exists a subgraph spanning $N$ in $G$ with weight at most $B$
  if and only if there exists a subgraph in $G'$ with at least $m
  =2|N| + 2|E||N| + 2|V|^2|N|$ vertices and at most $t = |B|$
  compressible arcs. This follows from the fact that any subgraph of
  $G'$ with at least $m$ vertices necessarily contains all the
  subgraphs $r(v)$, where $v \in N$, since the number of vertices in
  all $r(v)$, with $v \in V\setminus N$, is at most $|E| + 2|V|^2$ and
  the only compressible arcs of $G'$ are in the paths corresponding to
  the arcs of $G$. \qed
\end{proof}

We can obtain the same result for the specific case of de Bruijn
graphs. The reduction is very similar but uses a different graph family.

\begin{theorem}
  The \emph{Repeat Subgraph Problem} is NP-complete even for subgraphs
  of de Bruijn graphs on $|\Sigma| = 4$ symbols.
\end{theorem}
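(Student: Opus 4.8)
The plan is to reuse, almost verbatim, the reduction from {\sc STEINER}$(1,2)$ built for Theorem~\ref{thm:np-complete}, changing only the gadget family so that the graph produced is realizable as a genuine de Bruijn graph over $\Sigma=\{A,C,T,G\}$. Membership in NP is inherited, since de Bruijn graphs are a special case of directed graphs and a candidate subgraph can still be checked (connectivity, number of vertices, number of compressible arcs) in polynomial time. Thus everything reduces to two tasks: (i) designing a de Bruijn-realizable replacement $D(x)$ for the cycle gadget $R(x)$, and (ii) exhibiting a set of reads $R$ whose de Bruijn graph $G_k(R)$ is exactly the reduction graph $G'$.

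First I would isolate the only properties of $R(x)$ that the proof of Theorem~\ref{thm:np-complete} actually uses: it is connected, contains no compressible arc, and offers at least $x$ vertices from which an outgoing path can start (sinks) and at least $x$ vertices at which an incoming path can end (sources), the attachment of a path then creating compressible arcs only along that path. The cycle $R(x)$ is itself not de Bruijn-realizable: its alternating $4$-cycles would force two out-neighbours of a branching vertex to share their whole $(k-1)$-suffix, collapsing them. The useful observation is that an arc $(u,v)$ of a de Bruijn graph fails to be compressible exactly when $d^+(u)\ge 2$ or $d^-(v)\ge 2$; hence a binary \emph{out}-tree (every internal node of out-degree $2$) has no compressible arc and its leaves are sinks, while a binary \emph{in}-tree provides sources. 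I would therefore take $D(x)$ to be a binary in-tree and a binary out-tree, each with $x$ leaves, sharing a single root $\rho$ (which then has in- and out-degree $2$): the result is weakly connected, free of compressible arcs, and exposes $x$ sources and $x$ sinks, each of which acquires exactly one compressible arc when a connecting path is attached.

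The realizability step is where most of the work lies. I would let every read be a single $(k+1)$-mer, so that $A=span(R,k+1)$ is exactly the arc set I choose and $V=span(R,k)$ is exactly the set of their endpoints; then $u\to v$ is present iff I place its $(k+1)$-mer in $R$, and $d^+(u),d^-(v)$ count precisely the included incident arcs. Under this encoding $G_k(R)$ coincides with the intended $G'$ as soon as distinct intended vertices receive distinct $k$-mer labels and the two endpoints of every arc overlap in $k-1$ symbols. Choosing $k=\Theta(\log(|V|+|E|))$ makes $4^k$ polynomially large, leaves room for distinct labels, and lets each gadget carry a unique identifier that is preserved in all its node labels (tree depth $\log_2 x<k$), so that branches never accidentally merge and $k$-mers of different gadgets never coincide. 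The genuinely delicate point is the connecting paths: because the sink leaf of $r(u)$ and the source leaf of $r(v)$ carry different gadget identifiers, they cannot overlap in $k-1$ symbols, so a weight-$1$ edge cannot be realized by a single direct arc. I would sidestep this by realizing each edge of weight $w\in\{1,2\}$ by a \emph{simple} path of exactly $cw$ arcs through fresh $k$-mers, for a suitable $c=\Theta(k)$ large enough to interpolate between the two endpoint labels; every such path is internally of in- and out-degree $1$ and therefore contributes exactly $cw$ compressible arcs, the only compressible arcs of $G'$.

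Finally, correctness mirrors Theorem~\ref{thm:np-complete}: give each terminal $v\in N$ a gadget $D(\cdot)$ made polynomially larger than the combined size of all non-terminal gadgets and all connecting paths, give each non-terminal a small gadget, join $r(u),r(v)$ by a path of $cw$ compressible arcs for every edge $(u,v)$ of weight $w$, set $m$ to the total number of vertices in the terminal gadgets and $t=cB$. Since each terminal gadget dominates everything else, any subgraph with at least $m$ vertices must contain every terminal gadget; as the gadgets themselves have no compressible arc, the compressible arcs of any candidate solution are in bijection with the chosen connecting paths, hence with an edge set connecting $N$, and their total is $c$ times the weight of that set. Therefore a subgraph with $\ge m$ vertices and $\le t$ compressible arcs exists iff $G$ admits a Steiner tree on $N$ of weight $\le B$. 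The main obstacle is exactly this labeling task: producing an overlap-consistent, collision-free assignment of $k$-mers over the fixed four-letter alphabet that simultaneously realizes the exact intended degrees (in particular true sinks and sources), keeps every gadget free of compressible arcs, and gives connecting paths a number of compressible arcs that is strictly proportional to the edge weight.
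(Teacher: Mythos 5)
The paper offers no proof of this theorem at all: it states only that ``the reduction is very similar but uses a different graph family,'' so there is no detailed argument to compare against. Your reconstruction is a sound and credible way to fill that gap, and it is exactly in the spirit of the paper's hint: keep the {\sc STEINER}$(1,2)$ reduction of Theorem~\ref{thm:np-complete} intact and swap $R(x)$ for a de Bruijn--realizable gadget (your in-tree/out-tree $D(x)$ has the three properties the original proof uses: connectivity, zero compressible arcs, and $x$ sources and sinks), then realize every arc by a $(k+1)$-mer read with $k=\Theta(\log(|V|+|E|))$. Your key flexibility --- replacing a weight-$w$ edge by a path of exactly $cw$ compressible arcs with $c=\Theta(k)$ and scaling the budget to $t=cB$ --- is precisely what rescues the weight-$1$ case, where a direct arc between differently-labeled gadgets is impossible; since every arc of such a path has tail out-degree $1$ and head in-degree $1$ in any subgraph containing it, the count $c\cdot(\text{weight})$ is exact and the correctness argument of Theorem~\ref{thm:np-complete} carries over verbatim. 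One side remark is wrong, though it does not affect the construction: two out-neighbours of a vertex in a de Bruijn graph share their $(k-1)$-\emph{prefix} (not suffix) and remain distinct $k$-mers, so they do not ``collapse''; the true obstruction to realizing $R(x)$ is that the in-degree-$2$ sinks force consecutive sources around the cycle to share their $(k-1)$-suffix, which propagates to \emph{all} sources and caps their number at $|\Sigma|=4$. Beyond that, the explicit collision-free labeling (gadget identifiers surviving shifts of depth $\log_2 x$, edge-specific identifiers inside the interpolating windows) is sketched rather than executed, but you correctly identify it as the locus of the remaining work and the parameters you propose leave enough room for it to go through by standard encodings.
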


\section{Bubbles ``drowned'' in repeats}

In the previous section, we showed that an efficient algorithm to
\emph{directly} identify the subgraphs of a de Bruijn graph
corresponding to repeated elements is unlikely to exist since the
problem is NP-complete. However, in this section we show that in the
specific case of a local assembly of alternative splicing (AS) events,
we can \emph{implicitly} avoid such subgraphs. More precisely, it is
possible to find the structures (\textit{i.e.} bubbles) corresponding
to AS events in a de Bruijn graph that are not contained in a repeat
associated subgraph, thus answering to the main open question of
\cite{kissplice}.

\begin{figure}[htbp] 
  \includegraphics[width=\linewidth]{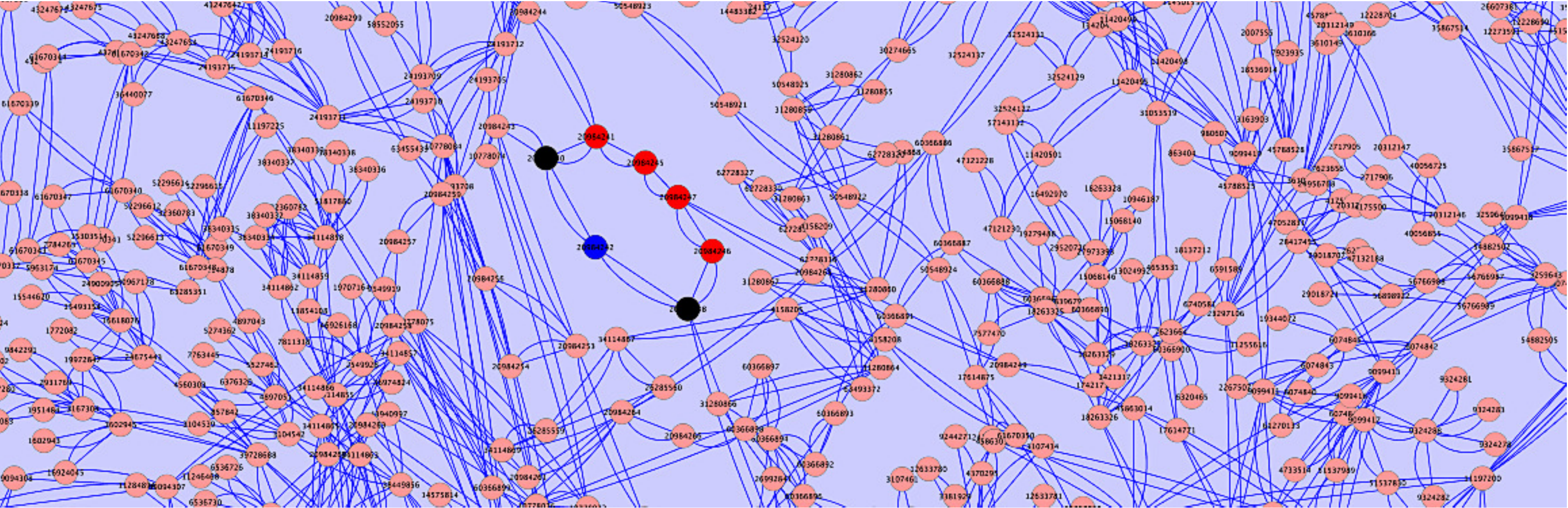}
  \caption{\footnotesize{An alternative splicing event in the SCN5A gene (human) trapped inside a complex
      region, likely containing repeat-associated subgraphs, in a de
      Bruijn graph. The alternative isoforms correspond to a pair of
      paths shown in red and blue.}}
  \label{fig:trapped}
  \vspace{-0.5cm}
\end{figure}

{\sc KisSplice}~\cite{kissplice} is a method for \emph{de novo}
calling of AS events through the enumeration of so-called
\emph{bubbles}, that correspond to pairs of vertex-disjoint paths in a
de Bruijn graph. The bubble enumeration algorithm proposed in
\cite{kissplice} was later improved in \cite{Sacomoto13}. However,
even the improved algorithm is not able to enumerate all bubbles
corresponding to AS events in a de Bruijn graph. There are certain
complex regions in the graph, likely containing repeat-associated
subgraphs but also real AS events~\cite{kissplice}, where both
algorithms take a huge amount of time.  See Fig.~\ref{fig:trapped} for
an example of a complex region with a bubble corresponding to an AS
event. The enumeration is therefore halted after a given timeout. The
bubbles \textit{drowned} (or trapped) inside these regions are thus
missed by {\sc KisSplice}. Here, we impose an extra restriction to the
bubbles, reflecting the fact that they are in complex regions but not
in a repeat-associated subgraph. In this way, we can enumerate bubbles
in complex regions implicitly avoiding repeat-associated subgraphs.

\begin{definition}[$(s,t,\alpha_1,\alpha_2,b)$-bubbles] 
  Given a directed graph $G = (V,E)$ and two vertices $s,t \in V$, an
  $(s,t,\alpha_1,\alpha_2,b)$-bubbles is a pair of vertex-disjoint
  $st$-paths $\pi_1$, $\pi_2$ with lengths bounded by
  $\alpha_1,\alpha_2$, each containing at most $b$ branching vertices.
\end{definition}

This extends $(s,t,\alpha_1,\alpha_2)$-bubbles (Def.~1 in
\cite{Sacomoto13}) by adding the extra condition that each path should
have at most $b$ branching vertices. The number of branching vertices
is proportional to the number of incompressible arcs. Intuitively, if
a bubble contains many incompressible arcs, it is likely contained in
a repeat-associated subgraph. Hence, by limiting the number of
branching vertices (incompressible arcs) we avoid the vertices from
repeat-associated subgraphs. Indeed, in Section~\ref{sec:exp} we show
that by considering bubbles with at most $b$ branching vertices in
{\sc KisSplice}, we increase both sensitivity and precision. This
supports our claim that by focusing on
$(s,t,\alpha_1,\alpha_2,b)$-bubbles, we avoid repeat-associated
subgraphs and recover at least part of the bubbles trapped in complex
regions.

\subsection{Enumerating bubbles avoiding repeats}
In this section, we modify the algorithm of \cite{Sacomoto13} to
enumerate all bubbles with at most $b$ branching vertices in each
path. Given a weighted directed graph $G = (V,E)$ and a vertex $s \in
V$, let $\mathcal{B}_s(G)$ denote the set of
$(s,*,\alpha_1,\alpha_2,b)$-bubbles of $G$. The algorithm recursively
partitions the solution space $\mathcal{B}_s(G)$ at every call until
the considered subspace is a singleton (contains only one solution)
and in that case it outputs the corresponding solution. In order to
avoid unnecessary recursive calls, it maintains the invariant that the
current partition contains at least one solution. It proceeds as
follows.

\smallskip

\noindent{\it Invariant:} At a generic recursive step on vertices
$u_1,u_2$ (initially, $u_1 = u_2 =s$), let $\pi_1 = s \leadsto u_1,
\pi_2 = s \leadsto u_2$ be the paths discovered so far (initially,
$\pi_1,\pi_2$ are empty). Let $G'$ be the current graph (initially,
$G' := G$). More precisely, $G'$ is defined as follows: remove from
$G$ all the vertices in $\pi_1$ and $\pi_2$ but $u_1$ and
$u_2$. Moreover, we also maintain the following invariant ($\ast$):
there exists at least one pair of paths $\bar{\pi}_1$ and
$\bar{\pi}_2$ in $G'$ that extends $\pi_1$ and $\pi_2$ so that $\pi_1
\cdot \bar{\pi}_1$ and $\pi_2 \cdot \bar{\pi}_2$ belongs to
$\mathcal{B}_s(G)$.

\smallskip

\noindent{\it Base case:} When $u_1 = u_2 = u$, output the
$(s,u,\alpha_1,\alpha_2,b)$-bubble given by $\pi_1$ and $\pi_2$.

\smallskip

\noindent{\it Recursive rule:} Let $\mathcal{B}_{s}(\pi_1,\pi_2, G')$
denote the set of $(s,*,\alpha_1,\alpha_2,b)$-bubbles to be listed by
the current recursive call, i.e. the subset of $\mathcal{B}_s(G)$ with
prefixes $\pi_1, \pi_2$. Then, it is the union of the following
disjoint sets\footnote{The same holds for $u_2$ instead of $u_1$.}.
\begin{itemize}
\item The bubbles of $\mathcal{B}_{s}(\pi_1,\pi_2, G')$ that use $e$,
  for each arc $e = (u_1,v)$ out-going from~$u_1$, that is
  $\mathcal{B}_{s}(\pi_1 \cdot e, \pi_2, G' - u_1)$, where $G'-u_1$ is
  the subgraph of $G'$ after the removal of $u_1$ and all its incident
  arcs.
\item The bubbles that do not use any arc from $u_1$, that is
  $\mathcal{B}_{s}(\pi_1,\pi_2, G'')$, where $G''$ is the subgraph of
  $G'$ after the removal of all arcs out-going from $u_1$.
\end{itemize}

In order to maintain the invariant ($\ast$), we only perform the
recursive calls when $\mathcal{B}_{s}(\pi_1 \cdot e, \pi_2, G' - u)$
or $\mathcal{B}_{s}(\pi_1,\pi_2, G'')$ are non-empty. In both cases,
we have to decide if there exists a pair of (internally)
vertex-disjoint paths $\bar{\pi}_1 = u_1 \leadsto t_1$ and
$\bar{\pi}_2 = u_2 \leadsto t_2$, such that $|\bar{\pi}_1| \leq
\alpha_1'$, $|\bar{\pi}_2| \leq \alpha_2'$, and
$\bar{\pi}_1,\bar{\pi}_2$ have at most $b_1,b_2$ branching vertices,
respectively. Since both the length and the number of branching
vertices are monotonic properties, i.e. the length and the number of
branching vertices of a path prefix is smaller than this number for
the full path, we can drop the vertex-disjoint condition. Indeed, let
$\bar{\pi}_1$ and $\bar{\pi}_2$ be a pair of paths satisfying all
conditions but the vertex-disjointness one. The prefixes $\bar{\pi}_1
= u_1 \leadsto t^*$ and $\bar{\pi}_2 = u_2 \leadsto t^*$, where $t^*$
is the first intersection of the paths, satisfy all conditions and are
internally vertex-disjoint. Moreover, using a dynamic programming
algorithm, we can obtain the following result.

\begin{lemma} 
  Given a non-negatively weighted directed graph $G = (V,E)$ and a
  source $s \in V$, we can compute all shortest paths from $s$ using
  at most $b$ branching vertices in $O(b|V||E|)$ time.
\end{lemma}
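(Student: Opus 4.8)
The plan is to set up a dynamic program indexed by a ``branch budget'' and to solve a shortest-path computation for each value of that budget. For a vertex $v$ and an integer $0 \le j \le b$, let $D_j[v]$ denote the length of a shortest $s \leadsto v$ path that uses at most $j$ branching vertices, where I call $u$ a \emph{branching vertex} when $d^+(u) \ge 2$ and I charge one unit of the budget each time a path leaves such a vertex along an out-arc (any other convention, such as counting total degree $\ge 3$, is handled identically by changing only the indicator below). I would compute the full table $D_0, D_1, \ldots, D_b$ in order of increasing $j$.

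For the base layer $j = 0$, a feasible path may only leave non-branching vertices, so it lives entirely in the subgraph $H$ consisting of the arcs $(u,v)$ with $d^+(u) = 1$. I would obtain $D_0[\cdot]$ by a single-source shortest-path computation from $s$ in $H$. Since $H$ may still contain directed cycles, I would run Bellman-Ford here, at cost $O(|V||E|)$. For the inductive step, given $D_{j-1}[\cdot]$ I would produce $D_j[\cdot]$ in two phases. First I carry over $D_j[v] := D_{j-1}[v]$ and then, for every branching vertex $u$ and every out-arc $(u,v) \in A$, I relax $D_j[v] := \min(D_j[v],\, D_{j-1}[u] + w(u,v))$; this pays for using $u$ as the new $j$-th branching vertex. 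Second, I propagate these seeds along non-branching segments by running Bellman-Ford on $H$ again, relaxing each arc of $H$ up to $|V|-1$ times. Each layer therefore costs $O(|V||E|)$, and summing over $j = 0, \ldots, b$ yields the claimed $O(b|V||E|)$ bound; predecessor pointers stored during the relaxations let me output an actual path for every target.

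For correctness I would argue by induction on $j$ that $D_j[v]$ equals the optimum. Given an optimal $s \leadsto v$ path $P = v_0 \cdots v_l$ with at most $j$ branching vertices: if it has at most $j-1$ of them, the carry-over already accounts for it; otherwise let $v_i$ be its last branching vertex, so that the suffix $v_{i+1} \cdots v_l$ leaves only non-branching vertices and lies entirely in $H$. The prefix $v_0 \cdots v_i$ uses at most $j-1$ branching vertices strictly before $v_i$, so by induction $D_{j-1}[v_i] \le |v_0 \cdots v_i|$; the seeding phase then gives $D_j[v_{i+1}] \le D_{j-1}[v_i] + w(v_i,v_{i+1})$, and the in-layer Bellman-Ford propagates this value to $v$. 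The reverse inequality is immediate, since each table entry is realized by an actual walk respecting the budget.

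The one point needing care, and the main obstacle, is the presence of directed cycles inside $H$: a single relaxation sweep per layer does not suffice to settle in-layer distances, which is precisely why each layer runs a full Bellman-Ford of $|V|-1$ sweeps and where the extra $|V|$ factor in the running time originates. Here non-negativity of the weights is essential, as it guarantees that optimal walks can be taken to be simple, so the Bellman-Ford values are genuine shortest-path lengths rather than being driven down by cycles; the careful counting of branching vertices in the prefix/suffix decomposition is the other detail that must be pinned down to make the induction go through.
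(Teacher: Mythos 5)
Your layered dynamic program---tables $D_0,\ldots,D_b$ indexed by the branch budget, seeded across branching out-arcs and propagated within each layer by a Bellman--Ford pass over the non-branching subgraph $H$ (where branching vertices are sinks, so no uncharged departure is possible)---is correct and achieves the claimed $O(b|V||E|)$ bound; since the paper offers no proof beyond the remark that the lemma follows ``using a dynamic programming algorithm,'' your argument is precisely the intended approach, carried out in full. The only wording to tighten is that the ``last branching vertex'' in your induction should be the last branching vertex the path \emph{departs from} (so that the case of a branching endpoint $v_l$ is handled), which your charging convention already implies.
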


As a corollary, we can decide if $\mathcal{B}_{s}(\pi_1,\pi_2, G)$ is
non-empty in $O(b|V||E|)$. Now, using an argument similar to
\cite{Sacomoto13}, \textit{i.e.} leaves of the recursion tree and
solutions are in one-to-one correspondence and the height of the
recursion tree is bounded by $2n$, we obtain the following theorem.

\begin{theorem}
  The $(s,*,\alpha_1,\alpha_2,b)$-bubbles can be enumerated in
  $O(b|V|^3|E| |\mathcal{B}_s(G)|)$ time. Moreover, the time elapsed
  between the output of any two consecutive solutions (delay) is
  $O(b|V|^3|E|)$.
\end{theorem}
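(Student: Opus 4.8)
The plan is to follow the standard analysis for recursive solution-space-partition enumeration algorithms (as in \cite{Sacomoto13}): argue the leaf/solution correspondence, bound the size of the recursion tree in an \emph{output-sensitive} way, multiply by the cost spent at each node to get the total time, and obtain the delay by charging the work along a single root-to-leaf excursion. Throughout I write $n = |V|$.

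First I would argue correctness, i.e. that the leaves of the recursion tree are in one-to-one correspondence with the solutions in $\mathcal{B}_s(G)$. The recursive rule writes $\mathcal{B}_{s}(\pi_1,\pi_2,G')$ as a \emph{disjoint} union over ``use out-arc $e=(u_1,v)$'' and ``use no out-arc of $u_1$'', so the subspaces partition the solutions with the given prefixes; iterating down to the base case $u_1=u_2=u$ isolates each bubble exactly once. The crucial point is the invariant $(\ast)$: a recursive call is performed only after verifying that the corresponding subspace is non-empty. By the corollary to the Lemma this test costs $O(b|V||E|)$, and it is legitimate to ignore the vertex-disjointness requirement in the test because both the length and the number of branching vertices are monotone along prefixes, so truncating a non-disjoint pair of extensions at their first common vertex $t^*$ yields a valid internally-disjoint witness. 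Consequently every explored subtree contains at least one leaf, and the number of leaves equals exactly $|\mathcal{B}_s(G)|$.

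Next I would bound the number of nodes of the recursion tree $T$. Its height is at most $2n$: at each step either $\pi_1$ or $\pi_2$ is extended by one arc, and since $\pi_1,\pi_2$ are simple paths in a graph on $n$ vertices their combined length is at most $2n$. Because $(\ast)$ guarantees that every node of $T$ is an ancestor of some leaf, every node lies on at least one root-to-leaf path; summing the leaf depths therefore bounds the total number of nodes by (number of leaves) $\times$ (height) $= O(n\,|\mathcal{B}_s(G)|)$. Then I would account for the work at a single node: with active endpoint $u_1$ there are $d^+(u_1)+1 = O(|V|)$ candidate children, and each triggers one non-emptiness test at cost $O(b|V||E|)$ (the bookkeeping for updating $\pi_1,\pi_2,G'$ and the residual budgets $\alpha_1',\alpha_2',b_1,b_2$ is dominated by this), giving $O(b|V|^2|E|)$ per node. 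Multiplying by the node count yields total time $O(b|V|^3|E|\,|\mathcal{B}_s(G)|)$. For the delay, between two consecutive outputs the algorithm only moves up to a common ancestor and back down, traversing $O(\mathrm{height}) = O(|V|)$ nodes and—again by $(\ast)$—never descending into an empty subtree; charging $O(b|V|^2|E|)$ per traversed node gives delay $O(b|V|^3|E|)$.

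The main obstacle is not the arithmetic of the bounds but justifying that they are \emph{output-sensitive}, which rests entirely on the invariant $(\ast)$: without it the recursion could waste unbounded time in subspaces containing no solution, breaking both the node-count estimate and the delay. The delicate step is therefore the one I flagged above—showing that the cheap, disjointness-free reachability test of the Lemma faithfully decides non-emptiness of the true (vertex-disjoint) subspace via the monotonicity-plus-truncation argument. Once that is secured, the tree-size and per-node cost estimates are routine and combine to give both claimed bounds.
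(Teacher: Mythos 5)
Your proposal is correct and takes essentially the same route as the paper, which derives the theorem by the argument of \cite{Sacomoto13}: the invariant $(\ast)$ (certified by the $O(b|V||E|)$ non-emptiness test, made valid by the same monotonicity-plus-truncation argument for dropping vertex-disjointness) puts leaves in one-to-one correspondence with solutions, the recursion tree has height at most $2n$, and the $O(|V|)$ candidate children per node at $O(b|V||E|)$ per test give the $O(b|V|^3|E|)$ delay and the stated total time. Your write-up merely spells out the per-node and tree-size accounting that the paper leaves implicit in its one-sentence sketch.
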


\subsection{Experimental results} \label{sec:exp}
To evaluate the performance of our method, we simulated RNA-seq data
using the {\sc FluxSimulator} version 1.2.1 \cite{gri12}.  We
generated 100 million reads of 75 bp with the default error model
provided by the {\sc FluxSimulator}. We used the RefSeq annotated
Human transcriptome (hg19 coordinates) as a reference and we performed
a two-step pipeline to obtain a mixture of mRNA and pre-mRNA
(i.e. with introns not yet spliced). To achieve this, we first ran the
{\sc FluxSimulator} with default settings with the Refseq
annotations. Then we modified the annotations to include the introns
and re-ran {\sc FluxSimulator} on this modified version. In this
second run, we additionally constrained the expression values of the
pre-mRNAs to be correlated to the expression values of their
corresponding mRNAs, as simulated in the first run. Finally, we mixed
the two sets of reads to obtain a total of 100M reads. We tested two
values: 5\% and 15\% for the proportion of reads stemming from
pre-mRNAs. Those values were chosen so as to correspond to realistic
ones as observed in a cytoplasmic mRNA extraction (5\%) and a total
(cytoplasmic + nuclear) mRNA extraction \cite{til12}.

On this simulated dataset, we ran {\sc KisSplice} \cite{kissplice}
version 2.1.0 (\textsc{KsOld}) and 2.2.0 (\textsc{KsNew}, with a
maximum number of branching nodes set to 5) and obtained lists of
detected bubbles that are putative alternative splicing (AS) events.

In order to assess the precision and the sensitivity of our method, we
compared our set of {\it found} AS events to the set of {\it true} AS
events.  Following the definition of {\sc Astalavista}, an AS event is
composed of two sets of transcripts, the inclusion/exclusion isoforms
respectively.  An AS event is said to be {\it true} if at least one
transcript among the inclusion isoforms and one among the exclusion
isoforms is present in the simulated dataset with at least one
read. We outline that this definition is very permissive and includes
AS events with very infrequent transcripts.

To compare the results of {\sc KisSplice} with the true AS events, we
propose that a true AS event is {\it found} (counted as a true
positive (TP)) if there is a bubble with one path corresponding to the
inclusion set and the other to the exclusion set. If not, the event is
counted as a false negative (FN). In the meantime, if a bubble does
not correspond to any true AS event, it is counted as a false positive
(FP).  To align the paths of the bubbles to transcript sequences, we
used the {\sc Blat} aligner~\cite{Kent02} with 95\% identity and a
constraint of 95\% of each bubble path length to be aligned (to
account for the sequencing errors simulated by {\sc FluxSimulator}).
We computed the sensitivity TP/(TP+FN) and precision TP/(TP+FP) for
each simulation case and we report their values for various classes of
expression of the minor isoform. Expression values are measured in
reads per kilobase (RPK).

The plots for the sensitivity of each version on the two simulated
datasets are shown in Fig.~\ref{fig:sensitivity}. On the one hand,
both versions of {\sc KisSplice} have similar sensitivity in the 5\%
pre-mRNA dataset, with \textsc{KsNew} performing slightly better,
especially for highly expressed variants. On the other hand, the
sensitivity of the new version is considerably better over all classes
of expression levels in the 15\% pre-mRNA dataset. In this case, the
precisions for \textsc{KsNew} and \textsc{KsOld} are 21\% and 41\%,
respectively. This represents an improvement of almost 100\% over the
old version. The results reflect the fact that the most problematic
repeats are in intronic regions. A small unspliced mRNA rate leads to
few repeat-associated subgraphs, so there are not many AS events
drowned in them (which are then missed by \textsc{KsOld}). In this
case, the advantage of using \textsc{KsNew} is less obvious, whereas a
large proportion of pre-mRNA leads to more AS events drowned in
repeat-associated subgraphs which are identified by \textsc{KsNew} and
missed by \textsc{KsOld}.

Clearly, any improvement in the sensitivity is meaningless if there is
also a significant decrease in precision. This is not the case
here. In both datasets, \textsc{KsNew} \emph{improves} the precision
of \textsc{KsOld}. It increases from 95\% to 98\% and from 90\% to
99\%, in the 5\% and 15\% datasets, respectively. Moreover, both
running times and memory consumption are very similar for both
versions.

In order to give an indication of the usefulness of our
repeat-avoiding bubble enumeration algorithm with real data, we also
ran \textsc{KsNew} and \textsc{KsOld} in the SK-N-SH Human
neuroblastoma cell line RNA-seq dataset (wgEncodeEH000169, total
RNA). In Fig.~\ref{fig:real} we have an example of a
\emph{non-annotated} exon skipping event not found by
\textsc{KsOld}. Observe that the intronic region contains several
transposable elements (many of which are Alu sequences), while the
exons contain none. This is a good example of a bubble (exon skipping
event) drowned in a complex region of the de Bruijn graph. The bubble
(composed by the two alternative paths) itself contains no repeated
elements, but it is surrounded by them. In other words, this is a
bubble with few branching vertices that is surrounded by
repeat-associated subgraphs. Since \textsc{KsOld} is unable to
differentiate between repeat-associated subgraphs and the bubble, it
spends a prohibitive amount of time in the repeat-associated subgraph
and fails to find the bubble.

\begin{figure}[htbp] 
  \includegraphics[height=6cm]{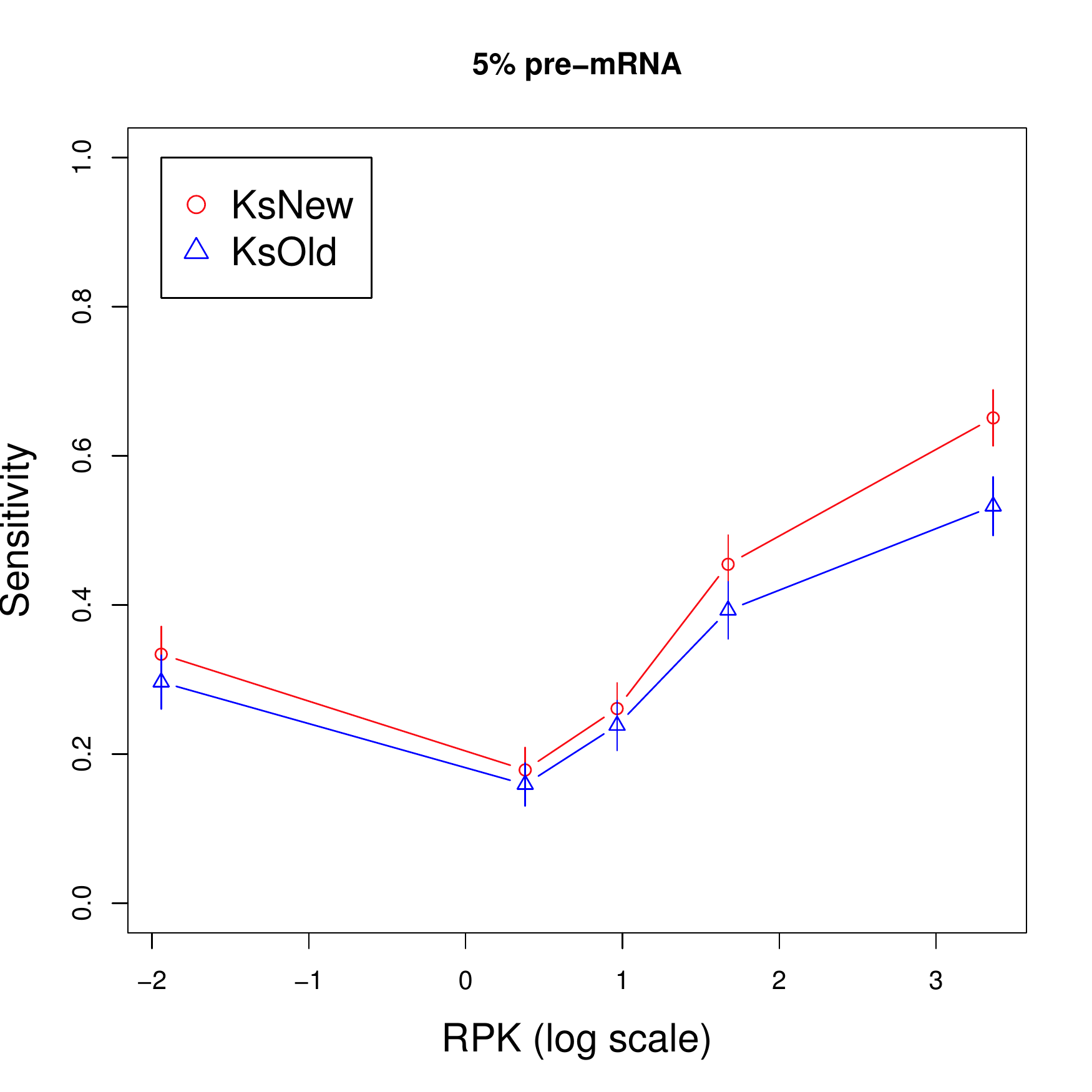}
  \includegraphics[height=6cm]{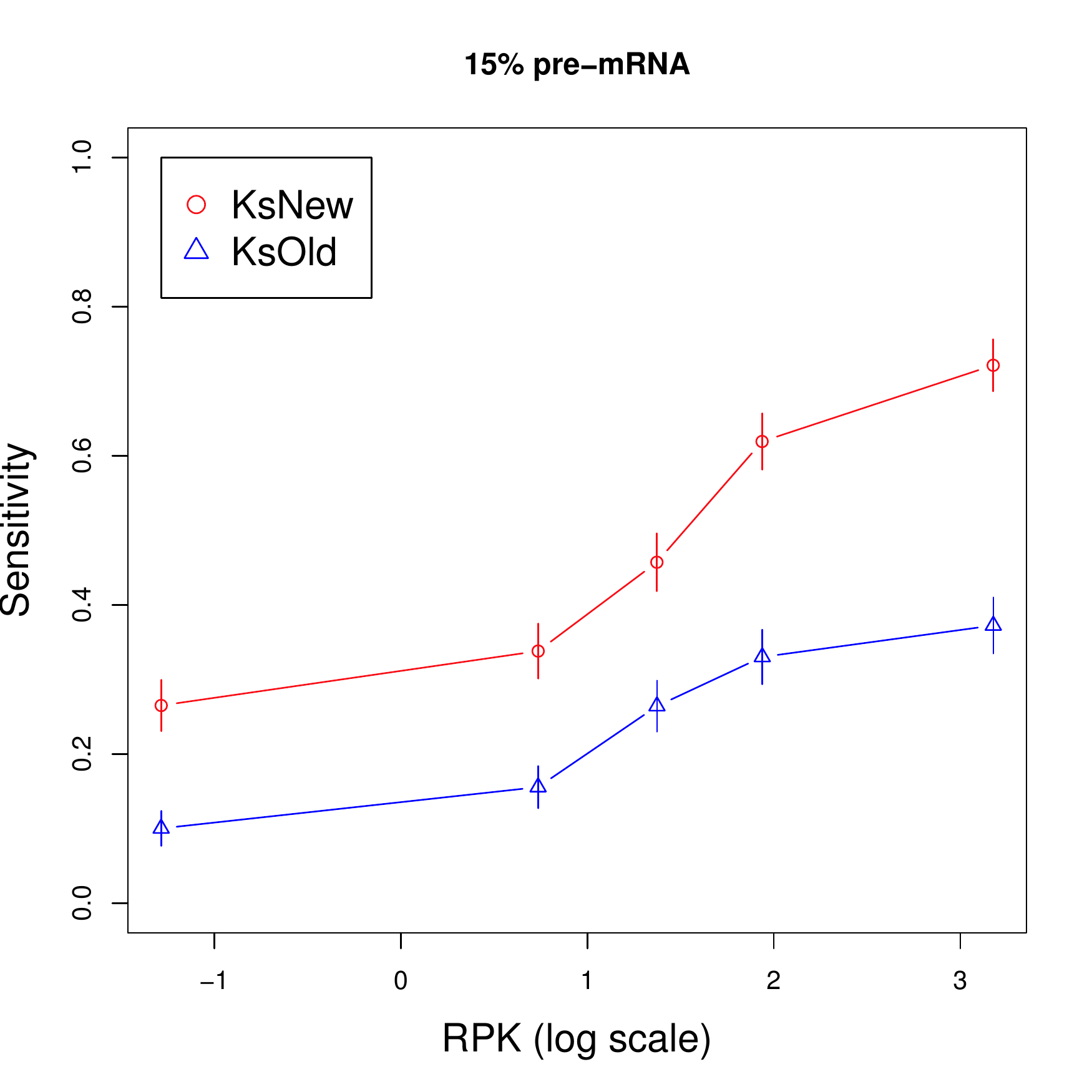}
  \caption{\footnotesize{Sensitivity of \textsc{KsNew} and
      \textsc{KsOld} for several classes of expression of the minor
      isoform. Each class contains the same number of AS events.}}
  \label{fig:sensitivity}
\end{figure}

\begin{figure}[htbp]
  \includegraphics[width=\linewidth]{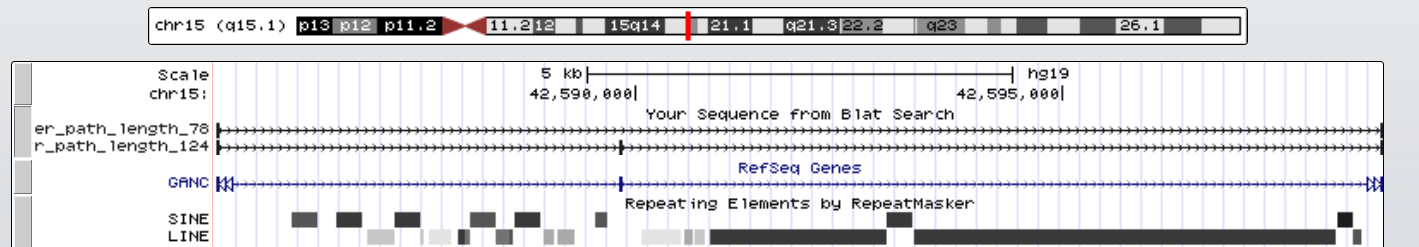}
  \caption{\footnotesize{One of the bubbles found only by
      \textsc{KsNew} with the corresponding sequences mapped to the
      reference human genome and visualized using the UCSC Genome
      Browser. The first two lines correspond to the sequences of,
      respectively, the shortest (exon exclusion variant) and longest
      paths of the bubble mapped to the genome. The blue line is the
      Refseq annotation. The last line shows the annotated SINE and
      LINE sequences (transposable elements). }}
  \label{fig:real}
\end{figure}

\section{Conclusion}
Although transcriptome assemblers are now commonly used, their way to
handle repeats is not satisfactory, arguably because the presence of
repeats in transcriptomes has been underestimated so far. Given that
most RNA-seq datasets correspond to total mRNA extractions, many
introns are still present in the data and their repeat content cannot
be simply ignored. In this paper, we propose a simple model for
repeats. Clearly this model could be improved, for instance by using a
tree-like structure to take into account the evolutionary nature of
repeat (sub)families. Variability in the sizes of the copies of a
repeat family would also enable to model more realistically the true
nature of families of transposable elements (the type of repeats which
cause most trouble in assembly). Certainly, a mathematical analysis of
a more realistic model would be more difficult to obtain. On the other
hand, our simple model captures an important qualitative
characteristic of repeat-associated subgraphs: the presence of few
compressible arcs. This characterization allows us to design an
efficient algorithm to identify bubbles corresponding to AS events
implicitly avoiding repeat-associated subgraphs. This approach
improves both the sensitivity and the precision of \textsc{KisSplice}.

\bibliographystyle{plain} \footnotesize{\bibliography{repeats-rna-seq}}


\end{document}